\newtheorem{remark}{Remark}
\newtheorem{theorem}{Theorem}
\newtheorem{lemma}{Lemma}
\newtheorem{example}{Example}
\newtheorem{definition}{Definition}
\title{{P versus B:} \\  P Systems as a Formal Framework\\
  for Controllability of Boolean Networks%
}
\author{Artiom Alhazov
  \institute{%
    Vladimir Andrunachievici Institute\\of Mathematics and Computer Science\\
    Academiei 5, Chi\c sin\u au, MD-2028, Moldova%
  }
  \email{artiom@math.md}
  \and
  Rudolf Freund
  \institute{%
    Faculty of Informatics, TU Wien\\
    Favoritenstra\ss{}e 9--11, 1040 Wien, Austria%
  }
  \email{rudi@emcc.at}
  \and
  Sergiu Ivanov
  \institute{
    Universit\'{e} Paris-Saclay, Univ. \'{E}vry, IBISC\\
    23, boulevard de France 91034 \'{E}vry, France%
  }
  \email{sergiu.ivanov@ibisc.univ-evry.fr}
}
\newcommand{\dottimes}{\mathop{\dot{\times}}}
\begin{document}
\maketitle

\begin{abstract}
  Membrane computing and P systems are a paradigm of massively
  parallel natural computing introduced by Gheorghe P\u{a}un in 1999,
  inspired by the structure of the living cell and by its biochemical
  reactions.  In spite of this explicit biological motivation, P systems 
  have not been extensively used in modelling real-world systems.  
  To confirm this intuition, we establish a state of the art investigation 
  comparing the use of P systems to that of Boolean networks in this
  line of research.  We then propose to use P systems as a tool for
  setting up formal frameworks to reason about other formalisms, and
  we introduce Boolean P systems, specifically tailored for capturing
  sequential controllability of Boolean networks.  We show how to
  tackle some technical challenges and prove that sequential
  controllability properly embeds in the framework of Boolean
  P systems.
\end{abstract}

\paragraph{Keywords:} Boolean networks, controllability, formal framework

%%%%%%%%%%%%%%%%%%%%%%%%%%%%%%%%%%%%

\section{Introduction}
\label{sec:intro}

Membrane computing and P systems are a paradigm of massively 
parallel computing introduced more than two decades ago by
Gheorghe P\u{a}un~\cite{Paun2000}, and inspired by the structure and
function of the biological cell.  Following the example of the cell,
a membrane (P) system is a hierarchical membrane structure defining
compartments containing multisets of objects, representing the 
biochemical species  in an abstract sense. Multiset rewriting rules are
attached to every membrane to represent the reactions.  Over the last
two decades, a considerable number of variants of P systems have been
introduced, inspired by various aspects of cellular life, or capturing
specific computing properties.  For comprehensive overviews we refer
the reader to \cite{imcs,handMC}.

Even though P systems resemble the organisation of a ``fundamental
unit'' of modern life, their use in representing actual biological
knowledge has historically been scarce.  Furthermore, one of the
salient examples of P systems in modelling are the works by the
Sevillan team (e.g.~\cite{CardonaCMPPHJPS2011,%
CardonaCMPHJPS2010,ColomerLMMPHJPSSVC2011,
ColomerMVP2014,GarciaQuismondoGRN2018,%
ValenciaCabreraGQJPSYP2013,ValenciaCabreraGPHJPRN2018}),
in which P systems represent ecosystems, an undeniably biological
structure, but far removed from the organisation of a cell.

To give more substance to this impression of underuse, we performed
a comparative bibliographic study of the literature using P systems to
represent any biological knowledge on the one hand, and on the other
hand the publications in the conference Computational Methods in
Systems Biology (e.g.~\cite{CMSB2021}) using Boolean networks to
represent biological knowledge.  A Boolean network is a set of Boolean
variables equipped with Boolean update functions, describing how to
compute the new value of the variables from their current values.
While Boolean networks represent well gene regulatory networks
(e.g.~\cite{Zanudo2018}), their structure arguably resembles less the
actual organisation of cellular processes.  Our study suggests
nonetheless that Boolean networks tend to be considerably more popular
than P systems for representing these processes.  We give the details
of this comparison in the appendix.

The main message of this paper is that the potential of P systems to
represent biological knowledge seems to remain relatively unexplored,
but that one can already rely on P systems as a flexible formal
framework providing powerful tools for studying other abstract
structures.  As an example, we show how to construct a P system
variant which naturally captures the semantics of sequentially
controlled Boolean networks.  In the future, this construction will
allow for more straightforward proofs of some properties of interest.

\medskip

This paper is structured as follows.  Sections~\ref{sec:preliminaries}
and~\ref{sec:seq-bn} recall the notions of P systems as well as Boolean 
networks, Boolean control networks, and sequential controllability.
Section~\ref{sec:boolp} introduces Boolean P systems.
Section~\ref{sec:quasimodes} introduces quasimodes to bridge between
the dynamics of Boolean networks and Boolean P~systems, and
Section~\ref{sec:boolp-bn} formally proves that Boolean P systems
capture Boolean networks.  Finally, Sections~\ref{sec:composition}
and~\ref{sec:boolp-seq} show how Boolean P systems explicitly embed
sequential controllability of Boolean networks.

%%%%%%%%%%%%%%%%%%%%%%%%%%%%%%%%%%%%

\section{Preliminaries}
\label{sec:preliminaries}

%%%%%%%%%%%%%%%%%%%%%%%%%%%%%%%%%%%%

To ensure unambiguous notation, in this section we briefly recall some
basic notions and concepts of formal language theory and membrane
computing.  For a detailed reference on both, we suggest~\cite{handMC}.

\medskip

For any alphabet $V$, $V^\circ$ is the set of multisets over $V$,  
and $V^*$ denotes the set of all strings over $V$. For any $u\in V^*$ 
and any $u\in V^\circ$, $|u|$ is the \emph{length} of the string $u$ and 
the number of elements in the multiset $u$, respectively. 
For $V^\circ$ and $V^*$ we use $\epsilon $ to denote the empty 
multiset and empty string, respectively.

We use $2^V$ to denote the set of all subsets of $V$ (the power set 
of $V$). Given two sets $A$ and $B$, by $B^A$ we denote the set of 
all functions $f : A \to B$.  

An \emph{indicator function} of a subset $U \subseteq V$ is the function
$i_U : V \to \{0,1\}$ with the property that $U = \{a \mid i_U(a) = 1\}$. 
In this paper, we will often use the same symbol to refer to a subset 
and to its indicator function.

A \emph{Boolean variable} is a variable which may only have values 
in the Boolean domain $\{0,1\}$.  

%%%%%%%%%%%%%%%%%%%%%%%%%%%%%%%%%%%%

\subsection{P Systems}

\begin{definition}
 A {\em P system} is a construct
\[
\Pi = (O, T, \mu, w_1,\ldots,w_n, R_1,\ldots R_n, h_i, h_o),
\]
where $O$ is the alphabet of objects, $T\subseteq O$ is the alphabet
of terminal objects, $\mu$ is the membrane structure injectively
labelled by the numbers from $\{1,\ldots,n\}$ and usually given by
a sequence of correctly nested brackets, $w_i$ are the multisets
giving the initial contents of each membrane $i$ ($1\leq i\leq n$),
$R_i$ is the finite set of rules associated with membrane $i$
($1\leq i\leq n$), and $h_i$ and $h_o$ are the labels of the input and
the output membranes, respectively ($1\leq h_i\leq n$,
$1\leq h_o\leq n$).
\end{definition}

Quite often the rules associated with membranes are multiset rewriting
rules (or special cases of such rules). Multiset rewriting rules have
the form $u\to v$, with $u\in O^\circ\setminus \{\epsilon \}$ and 
$v\in O^\circ$. If $|u| = 1$, the rule $u\to v$ is called {\em non-cooperative};
otherwise it is called {\em cooperative}.  In {\em communication P systems}, 
rules are additionally allowed to send symbols to the
neighbouring membranes.  In this case, for rules in $R_i$,
$v\in (O\times \mathit{Tar}_i)^\circ$, where $\mathit{Tar}_i$ contains
the symbols $\mathit{out}$ (corresponding to sending the symbol to the
parent membrane), $\mathit{here}$ (indicating that the symbol should
be kept in membrane $i$), and $\mathit{in}_h$ (indicating that the
symbol should be sent into the child membrane $h$ of membrane $i$).
When writing out the multisets over $O \times \mathit{Tar}_i$, the
indication $\mathit{here}$ is often omitted.

In P systems, rules often are applied in a \emph{maximally parallel} way: 
in one derivation step, only a non-extendable multiset of rules can be
applied.  The rules are not allowed to consume the same instance of
a symbol twice, which creates competition for objects and may lead to
the P system choosing non-determinstically between the maximal
collections of rules applicable in one step. Yet rules may also be 
applied in a \emph{sequential} way, i.e. in every derivation step one 
rule which is applicable to the current configuration is carried out. 
Moreover, when any multiset of applicable rules may be applied, 
we speak of the \emph{asynchronous} derivation mode.

A computation of a P system is traditionally considered to be
a sequence of configurations it can successively visit by applying the 
applicable rules in the given derivation mode (maximally parallel, 
sequential, asynchronous), stopping at a halting configuration.  
A \emph{halting configuration} is a configuration in which no rule can be 
applied any more, in any membrane. The \emph{result of a computation} 
in a P system $\Pi$ as defined above is the contents of the output 
membrane $h_o$ projected over the terminal alphabet $T$.

%%%
\begin{comment}
\begin{example}
  Figure~\ref{fig:example-p} shows the graphical representation of the
  P system formally given by
  \[
    \begin{array}{lcl}
      \Pi &=& (\{a, b, c, d\}, \{a,d\}, [_1[_2]_2]_1, R_1, R_2, 1, 1), \\
      R_2 &=& \{a \to aa, b \to b\, (c,\mathit{out})\}, \\
      R_1 &=& \emptyset.
    \end{array}
  \]

  \begin{figure}[h]
    \centering
    \begin{tikzpicture}
      \tikzstyle membrane=[draw,rounded corners=1mm,inner sep=2mm]
      \node[membrane,align=left] (mem2) {$a \to aa$\\
        $b \to b \, (c,\mathit{out})$\\[2mm]
        \hspace{7mm}$ab$};
      \node[right=0mm of mem2.south east] (lab2) {\scriptsize 2};
      \node[right=3mm of mem2] (skin content) {$d$};
      \node[membrane,fit={(mem2) (lab2) (skin content)}] (skin) {};
      \node[right=0mm of skin.south east] {\scriptsize 1};
    \end{tikzpicture}
    \caption{An example of a simple P system.}
    \label{fig:example-p}
  \end{figure}

  In the maximally parallel mode, the inner membrane 2 of $\Pi$ will
  apply as many instances of the rules as possible, thereby doubling
  the number of objects $a$, and ejecting a copy of $c$ into the surrounding
  (skin) membrane in each step.  The symbol $d$ in the skin membrane
  is not used.  Therefore, after $k$ steps of evolution, membrane~2
  will contain the multiset $a^{2^k} b$ and membrane 1 the multiset
  $c^k d$.  Since all rules are always applicable in $\Pi$, this
  P system never halts. \qed
\end{example}
\end{comment}
%%%

\begin{example}
  Figure~\ref{fig:exampleP} shows the graphical representation of the
  P system formally given by
  \[
    \begin{array}{lcl}
      \Pi &=& (\{a,b,c\}, \{a,b\}, [_1[_2]_2]_1, R_1, R_2, 1, 2), \\
      R_1 &=& \emptyset, \\
      R_2 &=& \{c \to c(a,\mathit{out}),\, c \to c(b,\mathit{out}),\, c\to \epsilon\}.
    \end{array}
  \]

  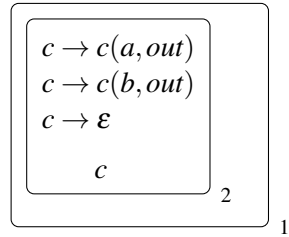
\begin{figure}[h]
    \centering
    \begin{tikzpicture}
      \tikzstyle membrane=[draw,rounded corners=1mm,inner sep=2mm]
      \node[membrane,align=left] (mem2) 
       {$c \to c(a,\mathit{out})$\\
        $c \to c(b,\mathit{out})$\\
        $c\to \epsilon$\\[2mm]
        \hspace{7mm}$c$};
      \node[right=0mm of mem2.south east] (lab2) {\scriptsize 2};
      \node[right=3mm of mem2] (skin content) {};
      \node[membrane,fit={(mem2) (lab2) (skin content)}] (skin) {};
      \node[right=0mm of skin.south east] {\scriptsize 1};
    \end{tikzpicture}
    \caption{An example of a simple P system.}
    \label{fig:exampleP}
  \end{figure}

In any derivation mode (maximally parallel, sequential, asynchronous), 
$\Pi$ may  apply one of the rules $c \to c(a,\mathit{out})$ or 
$c \to c(b,\mathit{out})$, thereby keeping the object $c$ in membrane~$2$ 
and at the same time sending out to membrane~$1$ one object $a$ or $b$, 
respectively. 

After $k$ such derivation steps, in membrane~$1$ a multiset 
$u\in \{a,b\}^{\circ}$ with $|u|=k$ has been obtained. Now applying 
the final rule $c\to \epsilon$, we obtain the halting configuration with 
no objects in membrane~$2$ and the multiset $u$ in membrane~$1$ 
as the result of the computation in~$\Pi$.\qed
\end{example}

%%%%%%%%%%%%%%%%%%%%%%%%%%%%%%%%%%%%

\section{Sequential Controllability of Boolean Networks}
\label{sec:seq-bn}

In this section we briefly recall the definition of Boolean networks,
the extension of the formalism with control inputs, and the problem of
sequential controllability.  For a more in-depth coverage of these
definitions and problems, as well as the underlying biomedical
motivations, we refer the reader to~\cite{PardoID21}.

%%%%%%%%%%%%%%%%%%%%%%%%%%%%%%%%%%%%

\subsection{Boolean Networks}
\label{sec:bn}

\begin{definition}
Let $X$ be a finite alphabet of Boolean variables.  A \emph{state} of 
these variables is any function $s$ in $\{0, 1\}^X$, i.e., $s : X \to \{0, 1\}$,
assigning a Boolean value to every single variable in $X$. By $S_X$ 
we denote the set  of all states $s$ in $\{0, 1\}^X$.

An \emph{update function} is a Boolean function computing
a Boolean value from a state: $f : s \to \{0,1\}$.  A \emph{Boolean
network} over $X$ is a function $F : S_X \to S_X$, in which the update
function for a variable $x \in X$ is computed as a projection of $F$:
$f_x(s) = F(s)_x$.
\end{definition}

A Boolean network $F$ computes trajectories on states by updating its
variables according to a (Bool\-ean) mode $M \subseteq 2^X$, defining
the variables which may be updated together in a step.
Typical examples of modes are the synchronous mode
$\mathit{syn} = \{X\}$ and the asynchronous mode
$\mathit{asyn} = \{\{x\} \mid x \in X\}$.  A trajectory $\tau$ of
a Boolean network under a given mode $M$ is any finite sequence of
states $\tau = (s_i)_{0 \leq i \leq n}$ such that $F$ can derive
$s_{i+1}$ from $s_i$ under the mode $M$.

An \emph{attractor} is a set of mutually reachable states $A \subseteq S_X$ of $F$ with the property that $F$ cannot escape from $A$.  Since the
set of states $S_X$ is finite, any run of a Boolean network, under any
mode, must end up in an attractor.  These are called the asymptotic
behaviors.

\begin{remark}
  These definitions are quite different from similar definitions
  generally used in P systems.  The asynchronous mode in Boolean
  networks only allows updating one variable at a time, while the
  asynchronous mode in P systems generally allows any combinations of
  updates.  Furthermore, no halting conditions are considered in
  Boolean networks, and the asymptotic behavior is often looked at as
  the important part of the dynamics. \qed
\end{remark}

\begin{example}\label{ex:bool}
  Consider the set of variables $X = \{x, y\}$ with the corresponding
  update functions $f_x(x, y) = \bar x \wedge y$ and
  $f_y(x, y) = x \wedge \bar y$.  Figure~\ref{fig:example-bool} shows
  the possible state transitions of this network under the synchronous
  and the asynchronous modes.  The states are represented as pairs of
  binary digits, e.g. $01$ stands for the state in which $x = 0$ and
  $y = 1$.

  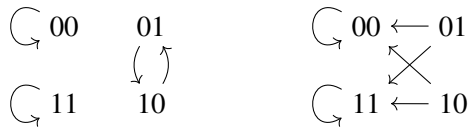
\begin{figure}[h]
    \centering
    \begin{tikzpicture}[node distance=5mm]
      \node (00) {$00$};
      \node[right=of 00] (01) {$01$};
      \node[below=of 01] (10) {$10$};
      \node[below=of 00] (11) {$11$};

      \draw[->] (00) to[out=150,in=-150,looseness=4] (00);
      \draw[->] (11) to[out=150,in=-150,looseness=4] (11);
      \draw[->] (01) to[bend right] (10);
      \draw[->] (10) to[bend right] (01);
    \end{tikzpicture}
    \hspace{15mm}
    \begin{tikzpicture} [node distance=5mm]
      \node (00) {$00$};
      \node[right=of 00] (01) {$01$};
      \node[below=of 01] (10) {$10$};
      \node[below=of 00] (11) {$11$};

      \draw[->] (00) to[out=150,in=-150,looseness=4] (00);
      \draw[->] (11) to[out=150,in=-150,looseness=4] (11);

      \draw[->] (01) to (11);
      \draw[->] (01) to (00);

      \draw[->] (10) to (11);
      \draw[->] (10) to (00);
    \end{tikzpicture}
    \caption{The synchnronous (left) and the asynchnronous (right)
      dynamics of the Boolean network in Example~\ref{ex:bool}.}
    \label{fig:example-bool}
  \end{figure}

  We notice that, under the synchronous mode, this network exhibits
  two kinds of behaviors. If initialized in the state $00$ or $11$,
  it will stay in the initial state forever---these two are stable states.
  If it is initialized in any one of the states $01$ or $10$, it will
  oscillate between them. The behavior of the network therefore is
  deterministic under the synchronous update mode.

  The state transitions are quite different under the asynchronous
  mode, under which only one variable may be updated at a time.
  While the states $00$ and $11$ remain stable, two possible
  transitions are now from the states $01$ and $10$, and
  there are no transitions leading from $01$ to $10$ or vice
  versa. \qed
\end{example}

%%%%%%%%%%%%%%%%%%%%%%%%%%%%%%%%%%%%

\subsection{Boolean Control Networks}
\label{sec:bcn}

Boolean networks are often used to represent biological networks in
the presence of external perturbations: environmental hazards, drug
treatments, etc. (e.g.,~\cite{Barabasi2011,BianeD19,PardoID21}). To
represent network reprogramming, an extension of Boolean networks can
be considered: Boolean control networks (BCN)~\cite{BianeD19}.
Informally, a BCN is a parameterized Boolean network template;
assigning a Boolean value to every single one of its parameters yields
a Boolean network.

Formally, a Boolean control network is a function
$F_U : S_U \to (S_X \to S_X)$, where the elements of $U$,
$U \cap X = \emptyset$, are called the control inputs.  To every
valuation of control inputs, $F_U$ associates a Boolean network.
A control $\mu$ of $F_U$ is any Boolean assignment to the control
inputs: $\mu : U \to \{0, 1\}$.

\smallskip

While this definition of BCNs is very general, in practice one
restricts the impact the control inputs may have on the BCN to some
biologically relevant classes.  One particularly useful class are
freeze perturbations, in which a variable in $X$ is temporarily frozen
to 0 or to 1, independently of its normal update function.

When Boolean update functions are written as propositional formulae,
freeze control inputs can be written directly in the formulae of the
update functions. For example, consider a Boolean network $F$ over
$X = \{x_1, x_2\}$ with the update functions $f_1 = x_1 \wedge x_2$
and $f_2 = x_2$.  To allow for freezing $x_1$, we introduce the
control variables $U = \{u_1^0, u_1^1\}$ into the Boolean formula of
$f_1$ in the following way:
$f'_1 = (x_1 \wedge x_2) \wedge u_1^0 \vee \bar u_1^1$.
Setting $u_1^0$ to 0 and $u_1^1$ to 1 freezes $x_1$ to 0,
independently of the values of $x_1$ and $x_2$.  Symmetrically,
setting $u_1^1$ to 0 and $u_1^0$ to 1 (or 0) freezes $x_1$ to 1.

%%%%%%%%%%%%%%%%%%%%%%%%%%%%%%%%%%%%

\subsection{Sequential Controllability of Boolean Control Networks}
\label{sec:seq-bcn}

In many situations, perturbations of biological networks do not happen
once, but rather accumulate or evolve over
time~\cite{Fearon1990,Lee2012,PardoID21}.  In the language of Boolean
control networks, this corresponds to considering sequences of
controls $(\mu_1, \dots, \mu_n)$.  More precisely, take a BCN $F_U$
with the variables $X$ and the control inputs $U$, as well as
a sequence of controls $\mu_{[n]} = (\mu_1, \dots, \mu_n)$,
$\mu_i : U \to \{0, 1\} \in S_U$.  This gives rise to a sequence of
Boolean networks $(F_U(\mu_1), \dots, F_U(\mu_n))$.  Fix a mode $M$
and consider a sequence of trajectories $(\tau_1, \dots, \tau_n)$ of
these Boolean networks.  Such a sequence is an evolution of $F_U$
under the sequence of controls $\mu_{[n]}$ if the last state of every
$\tau_i$ is the first state of $\tau_{i+1}$.  In this case we can
speak of the trajectory of the BCN $F_U$ under the control sequence
$\mu_{[n]}$ as the concatenation of the individual trajectories
$\tau_i$, in which the last state of every single $\tau_i$ is glued
together with the first state of $\tau_{i+1}$.

The problem of inference of control sequences (the CoFaSe problem) was
extensively studied in~\cite{PardoID21}.  Given the 3-tuple
$(F_U, S_\alpha, S_\omega)$, where $F_U$ is a BCN, $S_\alpha$ is a set
of starting states, and $S_\omega$ is a set of target states, the
CoFaSe problem consists in inferring a control sequence driving $F_U$
from any state in $S_\alpha$ to any state in $S_\omega$.  Deciding the
existence of such a sequence is PSPACE-hard.

\begin{example}\label{ex:bcn}
  While the framework of Boolean control networks allows for
  considering arbitrary kinds of control actions, it has been
  extensively used (e.g.~\cite{PardoID21}) for capturing freezing,
  i.e. setting and maintaining specific variables at specific values.
  These actions mean to model gene knock-ins and knock-outs.

  Consider again the Boolean network from Example~\ref{ex:bool}, with
  $X = \{x, y\}$ and the update functions $f_x = \bar x \wedge y$ and
  $f_y = x \wedge \bar y$.  A convenient way to express freezing
  controls is by explicitly including the control inputs into the
  update functions in the following way:
  \[
    \renewcommand{\arraystretch}{1.2}
    \begin{array}{lcl}
      f'_x &=& (\bar x \wedge y) \wedge u_x^0 \vee u_x^1, \\
      f'_y &=& (x \wedge \bar y) \wedge u_y^0 \vee u_y^1. \\
    \end{array}
  \]
  Notice that setting $u_x^0$ to 0 essentially sets $f_x' = 0$, and
  setting $u_x^1$ to 0 essentially sets $f_x' = 1$, independently of
  the actual value of $x$ or $y$.

  Consider now the following 3 controls:
  \[
    \renewcommand{\arraystretch}{1.3}
    \begin{array}[lcl]{lcl}
      \mu_1 &=& \{u_x^0 \leftarrow 0, u_x ^1 \leftarrow 0, u_y^0 \leftarrow 0, u_y^1 \leftarrow 0\}, \\
      \mu_2 &=& \{\underline{u_x^0 \leftarrow 1}, u_x ^1 \leftarrow 0, u_y^0 \leftarrow 0, u_y^1 \leftarrow 0\}, \\
      \mu_3 &=& \{u_x^0 \leftarrow 0, u_x ^1 \leftarrow 0, u_y^0 \leftarrow 0, \underline{u_y^1 \leftarrow 1}\}. \\
    \end{array}
  \]
  Informally $\mu_1$ does not freeze any variables, $\mu_2$ freezes
  $x$ to 0, and $\mu_3$ freezes $y$ to~1.  Consider now the BCN $F_U$
  with the variables $X = \{x,y\}$ and the controlled update functions
  $f_x'$ and $f_y'$.  Fix the synchronous update mode.  A trajectory
  of this BCN under the control $\mu_1$---i.e. a trajectory of
  $F_U(\mu_1)$---is $\tau_1 : 01 \to 10 \to 01$.  A trajectory of
  $F_U(\mu_2)$ is $\tau_2 : 01 \to 00 \to 00$; remark that $00$ is
  still a stable state of $F_U(\mu_2)$.  A trajectory of $F_U(\mu_3)$
  is $\tau_3 : 00 \to 01 \to 11$.  We can now glue together the
  trajectories $\tau_1$, $\tau_2$, and $\tau_3$ by identifying their
  respective ending and starting states, and we will obtain the
  following trajectory of the BCN $F_U$ under the control sequence
  $\mu_{[3]} = (\mu_1, \mu_2, \mu_3)$:
  \[
    \tau : 01 \to 10 \to 01 \to 00 \to 00 \to 01 \to 11.
  \]
  It follows from this construction that $\mu_{[3]}$ is a solution for
  the CoFaSe problem $(F_U, \{01\}, \{11\})$.  Remark that $11$ is not
  reachable from $01$ in the uncontrolled case, as
  Figure~\ref{fig:example-bool} illustrates. \qed
\end{example}

%%%%%%%%%%%%%%%%%%%%%%%%%%%%%%%%%%%%

\section{Boolean P Systems}
\label{sec:boolp}

In this section we introduce a new variant of P systems---Boolean
P systems---tailored specifically to capture sequential control of
Boolean networks with as little descriptional overhead as possible.
Rather than trying to be faithful to the original model as recalled in
Section~\ref{sec:preliminaries}, we here invoke the intrinsic
flexibility of the domain to design a variant fitting to our specific
use case.

We construct Boolean P systems as set rewriting systems.  A Boolean
state $s : X \to \{0, 1\}$ will be represented as the subset of $X$
obtained by considering $s$ as an indicator function:
$\{x \in X \mid s(x) = 1 \}$.  By abuse of notation, we will sometimes
use the symbol $s$ to refer both to the Boolean state and to the
corresponding subset of $X$.

A Boolean P system is a construct
\[
  \Pi = (V, R),
\]
where $V$ is the alphabet of symbols, and $R$ is a set of rewriting
rules with guards.  A rule $r \in R$ is of the form
\[
  r : A \to B \mid \varphi,
\]
where $A, B \subseteq X$ and $\varphi$ is the guard---a propositional
formula with variables from $V$.  The rule $r$ is applicable to a set
$W \subseteq V$ if $A \subseteq W$ and $W \in \varphi$, where by abuse
of notation we use the same symbol $\varphi$ to indicate the set of
subsets of $V$ which satisfy $\varphi$.  Formally, for
$W \subseteq V$, by $\varphi(W)$ we denote the truth value of the
formula obtained by replacing all variables appearing in $W$ by
1 in $\varphi$, and by 0 all variables from $V \setminus W$.  Then the
set of subsets satisfying $\varphi$ is
$\varphi = \{W \subseteq V \mid \varphi(W) \equiv \mathbf{1}\}$, 
where $\mathbf{1}$ is the Boolean tautology.

Applying the rule $r : A \to B \mid \varphi$ to a set $W$ results in
the set $(W \setminus A) \cup B$.  Applying a set of separately
applicable rules $\{r_i : A_i \to B_i \mid \varphi_i\}$ to $W$ results
in the new set
\[
  \left(W \setminus \bigcup_i A_i \right) \cup \bigcup_i B_i.
\]
Note how this definition excludes competition between the rules, as
only individual applicability is checked.  Further note that applying
a rule multiple times to the same configuration has exactly the same
effect as applying it once.

In P systems, the set of multisets of rules of $\Pi$ applicable to
a given configuration $W$ is usually denoted by
$\mathit{Appl}(\Pi, W)$~\cite{FreundV2007}.  Since in Boolean
P systems multiple applications of rules need not be considered, we
will only look at the set of \emph{sets} of rules applicable to
a given configuration $W$ of a Boolean P system $\Pi = (V, R)$, and
use the same notation $\mathit{Appl}(\Pi, W)$.  A mode $M$ of $\Pi$
will then be a function assigning to any configuration $W$ of $\Pi$
a set of sets of rules applicable to $W$, i.e.,
\[
  M : 2^V \to 2^R \text{ such that } M(W) \subseteq \mathit{Appl}(\Pi, W).
\]
If $|M(W)| \leq 1$ for any $W \subseteq V$, the mode $M$ is called
deterministic.  Otherwise it is called non-deterministic.

An evolution of $\Pi$ under the mode $M$ is a sequence of states
$(W_i)_{0 \leq i \leq k}$ with the property that $W_{i+1}$ is obtained
from $W_i$ by applying one of the sets of rules $R' \in M(W_i)$
prescribed by the mode $M$ in the state $W_i$.  This is usually
written as $W_i \overset{R'}{\longrightarrow} W_{i+1}$.  If no rules
are applicable to the state $W_k$, $W_k$ is called \emph{halting state}, 
and $(W_i)_{0 \leq i \leq k}$ is called a halting evolution.

Finally, we remark that the starting state is not part of this
definition of a Boolean P system.  We make this choice to better
parallel the way in which Boolean networks are defined.

\begin{example}
  Take $V = \{a, b\}$ and consider the following rules
  $r_1 : \{a, b\} \to \{a\} \mid \mathbf{1}$ and
  $r_2 : \{a\} \to \emptyset \mid \bar b$, where $\mathbf{1}$ is the
  Boolean tautology.  Construct the Boolean P system
  $\Pi = (V, \{r_1, r_2\})$.  Informally, $r_1$ removes $b$ from
  a configuration which contains $a$ and $b$, and $r_2$ removes $a$
  from the configuration which does not already contain $b$.
  A possible trajectory of $\Pi$ under the maximally parallel
  mode---which applies non-extendable applicable sets of rules---is
  $\{a, b\} \to \{a\} \to \emptyset$.  Note that only $r_1$ is
  applicable in the first step, since $r_2$ requires the configuration
  to not contain $b$. \qed
\end{example}

\begin{remark}\label{rem:boolp-rs}
  Boolean P systems as defined here are very close to other set
  rewriting formalisms, and in particular to reaction
  systems~\cite{EhrenfeuchtR2007}.  A reaction system $\mathcal A$
  over a set of species $S$ is a set of reactions (rules) of the form
  $a : (R_a, I_a, P_a)$, in which $R_a \subseteq S$ is called the set
  of reactants, $I_a \subseteq S$ the set of inhibitors, and
  $P_a \subseteq S$ the set of products.  For $a$ to be applicable to
  a set $W$, it must hold that $R_a \subseteq W$ and
  $I_a \cap W = \emptyset$.  Applying such a reaction to $W$ yields
  $P_a$, i.e., the species which are not explicitly sustained by the
  reactions disappear.

  We claim that despite their apparent similarity and tight
  relationship with Boolean functions, reaction systems are not such a
  good fit for reasoning about Boolean networks as Boolean
  P systems.  In particular:
  \begin{enumerate}
  \item Reaction systems lack modes and therefore non-determinism,
    which may appear in Boolean networks under the asynchronous
    Boolean mode.
  \item The rule applicability condition is more powerful in Boolean
    P systems, and closer to Boolean functions than in
    reaction systems.
  \item Symbols in reaction systems disappear unless sustained by
    a rule, which represents the degradation of species in
    biochemistry, but which makes reaction systems harder to use to
    directly reason about Boolean networks.
  \end{enumerate}
  We recall that our main goal behind introducing Boolean P systems is
  reasoning about Boolean networks in a more expressive framework.
  This means that zero-overhead representation of concepts from
  Boolean networks is paramount. \qed
\end{remark}

\begin{remark}\label{rem:rs-control}
  Reaction systems~\cite{EhrenfeuchtR2007} are intrinsically
  interesting for discussing controllability, because they are defined
  as open systems from the start, via the explicit introduction of
  context.  We refer to~\cite{IvanovP20} for an in-depth discussion of
  controllability of reaction systems. \qed
\end{remark}

%%%%%%%%%%%%%%%%%%%%%%%%%%%%%%%%%%%%

\section{Quasimodes}
\label{sec:quasimodes}

An update function in a Boolean network can always be computed, but
a rule in a Boolean P system need not always be applicable.  This is
the reason behind the difference in the way modes are defined in the
two formalisms: in Boolean networks a mode is essentially a set of
subsets of update functions, while in Boolean P~systems a mode is
a function incorporating applicability checks.  This means in
particular that Boolean network modes are not directly transposable to
Boolean P systems.

To better bridge the two different notions of modes, we introduce
quasimodes.  A \emph{quasimode} $\tilde M$ of a P system
$\Pi = (V, R)$ is any set of sets of rules: $\tilde M \subseteq 2^R$.
The mode $M$ corresponding to the quasimode $\tilde M$ is derived in
the following way:
\[
  M(W) = \tilde M \cap \mathit{Appl}(\Pi, W).
\]
Given a configuration $W$ of $\Pi$, $M$ picks only those sets of rules
from $\tilde M$ which are also applicable to $W$.  Thus, instead of
explicitly giving the rules to be applied to a given configuration of
a P system $W$, a quasimode advises the rules to be applied.

In the rest of the paper, we will say ``evolution of $\Pi$ under the
quasimode~$\tilde M$'' to mean ``evolution of $\Pi$ under the mode
derived from the quasimode $\tilde M$''.

%%%%%%%%%%%%%%%%%%%%%%%%%%%%%%%%%%%%

\section{Boolean P Systems Capture Boolean Networks}
\label{sec:boolp-bn}

%%%%%%%%%%%%%%%%%%%%%%%%%%%%%%%%%%%%

Consider a Boolean network $F$ over the set of variables $X$, and take
a variable $x \in X$ with its corresponding update function $f_x$.
The update function $f_x$ can be simulated by two Boolean P systems
rules: the rules corresponding to setting $x$ to 1, i.e. introducing
$x$ into the configuration, and the rules corresponding to setting $x$
to 0, i.e. erasing $x$ from the configuration:
\[
  R_x = \{\;\; \emptyset \to \{x\} \mid f_x, \;\; \{x\} \to \emptyset
  \mid \neg f_x \;\;\}.
\]
Now consider the following Boolean P system:
\[
  \Pi(F) = \left(X, \bigcup_{x \in X} R_x \right).
\]
We claim that $\Pi(F)$ faithfully simulates $F$.

\begin{theorem}\label{thm:bp-bn}
  Take a Boolean network $F$ and a Boolean mode $M$.  Then the Boolean
  P system $\Pi(F)$ constructed as above and working under the
  quasimode
  $\tilde M = \left\{\bigcup_{x \in m} R_x \mid m \in M\right\}$
  faithfully simulates $F$: for any evolution of $F$ under $M$ there
  exists an equivalent evolution of $\Pi(F)$ under $\tilde M$, and
  conversely, for any evolution of $\Pi(F)$ under $\tilde M$ there
  exists an equivalent evolution of $F$ under~$M$.
\end{theorem}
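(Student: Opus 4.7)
The plan is to reduce the statement to a bijection between single steps, then invoke induction on evolution length. For the bijection, I identify each Boolean state $s \in S_X$ with the configuration $W_s = \{x \in X \mid s(x) = 1\} \subseteq X$ of $\Pi(F)$; this extends to sequences termwise, so it suffices to establish that each single step $s \to s'$ of $F$ witnessed by some $m \in M$ corresponds to a single step $W_s \to W_{s'}$ of $\Pi(F)$ witnessed by $R' = \bigcup_{x \in m} R_x \in \tilde M$, and vice versa.

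For the forward direction, I fix $m \in M$ and perform a case analysis on each $x \in m$ based on the pair $(s(x), f_x(s))$. The rule $\emptyset \to \{x\} \mid f_x$ is individually applicable to $W_s$ exactly when $f_x(s) = 1$, and the rule $\{x\} \to \emptyset \mid \neg f_x$ exactly when $x \in W_s$ and $f_x(s) = 0$. Collecting the left- and right-hand sides $A_x, B_x$ of the applicable rules and computing $(W_s \setminus \bigcup_x A_x) \cup \bigcup_x B_x$, I check that this equals $W_{s'}$ by confirming that every $x \in m$ is placed into or removed from $W_s$ according to $f_x(s)$, while no rule in $R'$ touches any variable outside $m$, so those coordinates are preserved. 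The reverse direction is essentially the same computation read backward: given $R' = \bigcup_{x \in m} R_x$ with $m \in M$, the new configuration $(W \setminus \bigcup A) \cup \bigcup B$ can only differ from $W$ on coordinates in $m$, and the direction of the change at each such coordinate is dictated by the guards $f_x$ and $\neg f_x$, matching exactly the Boolean-network step driven by $m$.

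The main technical obstacle is that the advised quasimode element $\bigcup_{x \in m} R_x$ contains two rules per variable $x \in m$, but the two guards $f_x$ and $\neg f_x$ are mutually exclusive, so at most one of them is individually applicable at any configuration (and when $x \notin W$ and $f_x(W) = 0$, neither is). Under the strict reading of $\mathit{Appl}(\Pi(F), W)$ as sets of separately applicable rules, the intersection $\tilde M \cap \mathit{Appl}(\Pi(F), W)$ would be empty whenever $m \neq \emptyset$. The cleanest way around this, which I would adopt explicitly, is to interpret the element of $\tilde M$ advised at $W$ as its subset of individually applicable rules, i.e. to read the quasimode derivation as $M(W) = \{R' \cap \mathit{Appl}(\Pi(F), W) \mid R' \in \tilde M\}$; the rules so discarded are precisely those whose absence makes no difference to the resulting configuration. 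Once this reading is fixed, the case analysis above yields the equivalence of single steps, and a straightforward induction on evolution length delivers the theorem.
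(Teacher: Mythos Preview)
Your approach mirrors the paper's: identify Boolean states with subsets of $X$, establish a single-step correspondence between $F$ updating via $m \in M$ and $\Pi(F)$ applying $\tilde m = \bigcup_{x \in m} R_x$, and conclude that evolutions correspond. The paper's proof is terser than yours---it asserts the correspondence without spelling out the case analysis on $(s(x), f_x(s))$---but the logical structure is identical.

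Where you go beyond the paper is in your third paragraph. You correctly observe that under the paper's literal definition $M(W) = \tilde M \cap \mathit{Appl}(\Pi(F), W)$, the intersection is empty whenever $m \neq \emptyset$: for each $x \in m$ the two rules in $R_x$ have mutually exclusive guards $f_x$ and $\neg f_x$, so they can never both be individually applicable, and hence $\bigcup_{x\in m} R_x \notin \mathit{Appl}(\Pi(F), W)$. The paper's proof does not acknowledge this at all; it simply asserts that $\Pi(F)$ can derive $W'$ from $W$ under $\tilde M$. Your proposed reinterpretation---filtering each advised set down to its individually applicable subset---is consistent with the paper's stated intent that ``a quasimode advises the rules to be applied,'' and is in fact what the paper's argument tacitly needs for the theorem to hold as written. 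So your proof is not merely a match for the paper's; it patches a definitional gap the paper leaves open.
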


\begin{proof}
  Consider two arbitrary states $s$ and $s'$ of $F$ such that $s'$ is
  reachable from $s$ by the update prescribed by an element $m \in M$.
 Now consider the subsets of variables $W, W' \subseteq X$ defined by
  $s$ and $s'$ taken as respective indicator functions.  It follows
  from the construction of $\tilde M$ that it contains an element
  $\tilde m$ including the update rules for all the variables of $m$:
  $\tilde m = \bigcup_{x \in m} R_x$.  Therefore, $\Pi(F)$ can derive
  $W'$ from $W$ under the quasimode $\tilde M$.

  Conversely, consider two subsets of variables $W, W' \subseteq X$
  such that $\Pi(F)$ can derive $W'$ from $W$ under the update
  prescribed by an element $\tilde m \in \tilde M$.  By construction
  of $\tilde M$, there exists a subset $m \subseteq X$ such that
  $\tilde m = \bigcup_{x \in m} R_x$.  Now take the indicator
  functions $s, s' : X \to \{0, 1\}$ describing $W$ and $W'$
  respectively.  Then $F$ can derive $s'$ from $s$ by updating the
  variables in $m$.

  We conclude that the transitions of $\Pi(F)$ exactly correspond to
  the transitions of $F$, which proves the statement of the
  theorem.
\end{proof}

The above proof stresses the original motivation behind the
introduction of Boolean P systems as a framework for direct and easy
generalization of Boolean networks: Boolean P systems were designed to
make the simulation of Boolean networks as easy as possible.

\begin{remark}
  Incidentally, Boolean P systems also capture reaction systems (see
  also Remarks~\ref{rem:boolp-rs} and~\ref{rem:rs-control}).  Indeed,
  consider a reaction $a = (R_a, I_a, P_a)$ with the reactants $R_a$,
  inhibitors $I_a$, and products $P_a$.  It can be directly simulated
  by the Boolean P system rule $\emptyset \to P_a \mid \varphi_a$,
  where
  $\varphi_a = \bigwedge_{x \in R_a} x \wedge \bigwedge_{y \in I_a}
  \bar y$.  The degradation of the species in reaction systems is
  simulated by adding a rule $x \to \emptyset \mid \mathbf{1}$ for
  every species $x$, where $\mathbf{1}$ is the Boolean tautology. \qed
\end{remark}

%%%%%%%%%%%%%%%%%%%%%%%%%%%%%%%%%%%%

\section{Composition of Boolean P Systems}
\label{sec:composition}

%%%%%%%%%%%%%%%%%%%%%%%%%%%%%%%%%%%%

In this section, we define the composition of Boolean P systems in the
spirit of automata theory.  Consider two Boolean P systems
$\Pi_1 = (V_1, R_1)$ and $\Pi_2 = (V_2, R_2)$.  We will call the union
of $\Pi_1$ and $\Pi_2$ the Boolean P system
$\Pi_1 \cup \Pi_2 = (V_1 \cup V_2, R_1 \cup R_2)$.  Note that the
alphabets $V_1$ and $V_2$, as well as the rules $R_1$ and $R_2$ are
not necessarily disjoint.

To talk about the evolution of $\Pi_1 \cup \Pi_2$, we first define
a variant of Cartesian product of two sets of sets $A$ and $B$, which
consists in taking the union of the elements of the pairs:
$A \dottimes B = \{a \cup b \mid a \in A, b \in B\}$.  We remark
now that
\[
  \forall W\subseteq V_1 \cup V_2 : \mathit{Appl}(\Pi_1 \cup \Pi_2, W)
  = \mathit{Appl}(\Pi_1, W) \dottimes \mathit{Appl}(\Pi_2, W).
\]
Indeed, since the rules of Boolean P systems do not compete for
resources among them, the applicability of any individual rule is
independent of the applicability of the other rules.  Therefore, the
applicability of a set of rules of $\Pi_1$ to a configuration $W$ is
independent of the applicability of a set of rules of $\Pi_2$ to~$W$.

For a mode $M_1$ of $\Pi_1$ and a mode $M_2$ of $\Pi_2$, we define
their product as follows:
\[
  (M_1 \times M_2)(W) = M_1(W) \dottimes M_2(W).
\]
The union of Boolean P systems $\Pi_1 \cup \Pi_2$ together with the
product mode $M_1 \times M_2$ implements parallel composition of the
two P systems.  In particular, if the alphabets of $\Pi_1$ and $\Pi_2$
are disjoint, the projection of any evolution of $\Pi_1 \cup \Pi_2$
under the mode $M_1 \times M_2$ on the alphabet $V_1$ will yield
a valid evolution of $\Pi_1$ under $M_1$ (modulo some repeated
states), while the projection on $V_2$ will yield a valid evolution of
$\Pi_2$ under the mode $M_2$ (modulo some repeated states).  Note this
property may not be true if the two alphabets intersect
$V_1 \cap V_2 \neq \emptyset$.

Quasimodes fit naturally with the composition of modes, as the
following lemma shows.

\begin{lemma}
  If the mode $M_1$ can be derived from the quasimode $\tilde M_1$ and
  $M_2$ from the quasimode $\tilde M_2$, then the product mode
  $M_1 \times M_2$ can be derived from
  $\tilde M_1 \dottimes \tilde M_2$:

  \begin{center}
    \begin{tikzpicture}[node distance=7mm and 6mm]
      \node (M1-x-M2) {$M_1 \times M_2$};
      \node[above=of M1-x-M2] (tM1-x-tM2) {$\tilde M_1 \dottimes \tilde M_2$};
      \node[yshift=-3mm,base left=of tM1-x-tM2] (tM1) {$\tilde M_1$};
      \node[yshift=-3mm,base right=of tM1-x-tM2] (tM2) {$\tilde M_2$};
      \node[yshift=-1mm,below=of tM1] (M1) {$M_1$};
      \node[yshift=-1mm,below=of tM2] (M2) {$M_2$};

      \draw[->,densely dashed] (tM1) -- (M1);
      \draw[->,densely dashed] (tM2) -- (M2);
      \draw[->,densely dashed] (tM1-x-tM2) -- (M1-x-M2);

      \draw[<-] (M1) -- (M1-x-M2);
      \draw[<-] (M2) -- (M1-x-M2);
      \draw[<-] (tM1) -- (tM1-x-tM2);
      \draw[<-] (tM2) -- (tM1-x-tM2);
    \end{tikzpicture}
  \end{center}
  where a dashed arrow \tikz\draw[->,densely dashed] (0,0) -- (5mm,0);
  from a quasimode to a mode indicates that the mode is derived from
  the quasimode, and the arrows \tikz\draw[->] (0,0) -- (5mm,0); are
  the respective projections.
\end{lemma}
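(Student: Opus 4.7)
The plan is to unfold both sides of the equality $(M_1 \times M_2)(W) = (\tilde M_1 \dottimes \tilde M_2) \cap \mathit{Appl}(\Pi_1 \cup \Pi_2, W)$ pointwise in $W \subseteq V_1 \cup V_2$ and reduce the lemma to a single set-theoretic identity. Writing $A_i = \mathit{Appl}(\Pi_i, W)$ for brevity, the left-hand side unfolds to $(\tilde M_1 \cap A_1) \dottimes (\tilde M_2 \cap A_2)$ by the definitions of the product mode and of derivation of a mode from a quasimode, while the right-hand side rewrites as $(\tilde M_1 \dottimes \tilde M_2) \cap (A_1 \dottimes A_2)$ using the identity $\mathit{Appl}(\Pi_1 \cup \Pi_2, W) = A_1 \dottimes A_2$ established just above the lemma statement.

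So the lemma reduces to the set-theoretic identity $(\tilde M_1 \cap A_1) \dottimes (\tilde M_2 \cap A_2) = (\tilde M_1 \dottimes \tilde M_2) \cap (A_1 \dottimes A_2)$. The inclusion $\subseteq$ is immediate: any $m_1 \cup m_2$ with $m_i \in \tilde M_i \cap A_i$ obviously belongs to both $\tilde M_1 \dottimes \tilde M_2$ and $A_1 \dottimes A_2$.

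The reverse inclusion is the real content and where I expect the only subtlety, since $\dottimes$ does \emph{not} distribute over intersection in general: decompositions of a given element need not be unique, let alone coordinated. Given $r$ in the right-hand side there are two a priori unrelated decompositions, $r = m_1 \cup m_2$ with $m_i \in \tilde M_i$, and a separate witness placing $r$ in $A_1 \dottimes A_2$. I would resolve this by exploiting the structural fact already used to justify $\mathit{Appl}(\Pi_1 \cup \Pi_2, W) = A_1 \dottimes A_2$: applicability of a rule to $W$ depends only on the rule and on $W$, so $A_i$ is the full power set of the set $R_i^W$ of rules of $\Pi_i$ individually applicable to $W$. From $r \in A_1 \dottimes A_2$ we get $r \subseteq R_1^W \cup R_2^W$; for any $\rho \in m_1 \subseteq R_1$, membership in $R_1^W \cup R_2^W$ forces $\rho$ to be applicable to $W$, and combined with $\rho \in R_1$ this gives $\rho \in R_1^W$. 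Hence $m_1 \subseteq R_1^W$, i.e.\ $m_1 \in A_1$; symmetrically $m_2 \in A_2$, so the decomposition $r = m_1 \cup m_2$ already witnesses $r \in (\tilde M_1 \cap A_1) \dottimes (\tilde M_2 \cap A_2)$, closing the argument. Conceptually, quasimodes commute with $\dottimes$ precisely because rule applicability is intrinsic and the applicability families are downward closed.
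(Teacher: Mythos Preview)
Your argument is correct and follows essentially the same route as the paper: both proofs establish the two inclusions between $(\tilde M_1 \dottimes \tilde M_2) \cap \mathit{Appl}(\Pi_1\cup\Pi_2,W)$ and $M_1(W) \dottimes M_2(W)$, and both rely on the fact that in Boolean P systems applicability is checked rule by rule, so any subset of an applicable set is itself applicable. Your write-up is in fact more careful than the paper's on the one nontrivial direction: you make explicit that $\dottimes$ does not distribute over intersection in general and that the argument only goes through because each $A_i$ is the full power set $2^{R_i^W}$, whereas the paper simply asserts that the pieces $m_1$, $m_2$ of the $\tilde M_i$-decomposition are already applicable without isolating this hypothesis.
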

\begin{proof}
  Pick a state $W \subseteq X$ and recall that the mode $M_{12}$
  derived from $\tilde M_1 \dottimes \tilde M_2$ is defined as
  follows:
  \[
    M_{12}(W) = \left(\tilde M_1 \dottimes \tilde M_2\right) \cap
    \mathit{Appl}(\Pi, W).
  \]

  Consider an arbitrary element $m_{12} \in M_{12}(W)$ and remark that
  it can be seen as a union $m = m_1 \cup m_2$ where $m_1$ is a subset
  of applicable rules with the property that $m_1 \in \tilde M_1$, and
  $m_2$ is a subset of applicable rules with the property that
  $m_2 \in \tilde M_2$.  Thus
  $m_1 \in \tilde M_1 \cap \mathit{Appl}(\Pi,W)$ and
  $m_2 \in \tilde M_2 \cap \mathit{Appl}(\Pi,W)$, implying that
  \[
    M_{12}(W) \subseteq
    \left(\tilde M_1 \cap \mathit{Appl}(\Pi,W)\right)
    \dottimes
    \left(\tilde M_2 \cap \mathit{Appl}(\Pi,W)\right).
  \]

  On the other hand, consider arbitrary
  $m_1 \in \tilde M_1 \cap \mathit{Appl}(\Pi,W)$ and arbitrary
  $m_2 \in \tilde M_2 \cap \mathit{Appl}(\Pi,W)$.  By definition of
  $\dottimes$,
  $m_1 \cup m_2 \in \tilde M_1 \dottimes \tilde M_2$.  Remark that
  every rule in $m_1$ and $m_2$ is individually applicable, meaning
  that they are also applicable together and that
  $m_1 \cup m_2 \in \mathit{Appl}(\Pi, W)$.  Combining this
  observation with the reasoning from the previous paragraph we
  finally derive:
  \[
    M_{12}(W) =
    \left(\tilde M_1 \cap \mathit{Appl}(\Pi,W)\right)
    \dottimes
    \left(\tilde M_2 \cap \mathit{Appl}(\Pi,W)\right)
    = M_1(W) \dottimes M_2(W),
  \]
  which implies that $M_{12} = M_1 \times M_2$ and concludes the
  proof.
\end{proof}

%%%%%%%%%%%%%%%%%%%%%%%%%%%%%%%%%%%%

\section{Boolean P Systems Capture Sequential Controllability}
\label{sec:boolp-seq}

%%%%%%%%%%%%%%%%%%%%%%%%%%%%%%%%%%%%

Underlying sequential controllability of Boolean control networks
(Section~\ref{sec:seq-bcn}) is the implicit presence of a master
dynamical system emitting the control inputs of the network and
thereby driving it.  This master system is external with respect to
the controlled BCN.  The framework of Boolean P systems is
sufficiently general to capture both the master system and the
controlled BCN in a single homogeneous formalism.  In this section, we
show how to construct such Boolean P systems for dealing with
questions of controllability.

Any BCN $F_U : S_U \to (S_X \to S_X)$ can be written as a set of
propositional formulae over $X \cup U$.  Indeed, any control
$\mu \in S_U$ can be translated into the conjuction
$\bigwedge_{u \in \mu} u \wedge \bigwedge_{v \in U \setminus \mu} \bar
v$.  Now fix an $x \in X$ and consider the formula
\begin{equation}
  \bigvee_{\mu \in S_U} \mu \wedge F(\mu)_x,
  \label{eq:controlled-update-function}
\end{equation}
in which $\mu$ enumerates all the conjuctions corresponding to the
controls in $S_U$ and $F(\mu)_x$ is the propositional formula of the
update function which $F$ associates to $x$ under the control $\mu$.
With the formulae (\ref{eq:controlled-update-function}), we can
translate any BCN $F_U : S_U \to (S_X \to S_X)$ into
$F' : S_{X \cup U} \to S_X$ and use the set $R_x$ from
Section~\ref{sec:boolp-bn} to further translate the individual
components of $F'$ to pairs of Boolean P system rules.  Denote
$\Pi = (X \cup U, R)$ the Boolean P system whose set of rules is
precisely the union of the sets $R_x$ mentioned above. Finally,
construct the Boolean P~system $\Pi_U(U, R_U)$ with the following
rules whose guards are always true:
\[
  \renewcommand{\arraystretch}{1.3}
  \begin{array}{lcl}
    R_U &=& R_U^0 \cup R_U^1,\\[1mm]
    R_U^0 &=& \{\;\,\{u\} \to \emptyset \mid \mathbf{1} \;\, \mid u \in U\,\},\\
    R_U^1 &=& \{\;\,\emptyset \to \{u\} \mid \mathbf{1} \;\, \mid u \in U\,\}. \\
  \end{array}
\]

Suppose now that the original BCN $F_U$ runs under the mode $M$, and
consider the corresponding quasimode
$\tilde M = \left\{\bigcup_{x\in m}R_x \mid m \in M\right\}$, as well
as the quasimode
\[
  \tilde M_{U} =  \{R_U^0\} \dottimes 2^{R_U^1}.
\]
Every element of $\tilde M_U$ is a union of $R_U^0$ and a subset of
$R_U^1$. We claim that the Boolean P system $\Pi \cup \Pi_U$ running
under the quasimode $\tilde M \dottimes \tilde M_{U}$ faithfully
simulates the BCN $F_U$ running under the mode $M$.  The following
theorem formalizes this claim.

\begin{theorem}\label{thm:bp-seq}
  Consider a BCN $F_U$ running under the mode $M$.  Then the Boolean
  P system $\Pi \cup \Pi_U$ constructed as above and running under the
  quasimode $\tilde M \dottimes \tilde M_U$ faithfully simulates
  $F_U$:
  \begin{enumerate}
  \item For any evolution of $F_U$ under $M$ there exists an
    equivalent evolution of $\Pi \cup \Pi_U$ under
    $\tilde M \dottimes \tilde M_U$.
  \item For any evolution of $\Pi \cup \Pi_U$ under
    $\tilde M \dottimes \tilde M_U$ there exists an equivalent
    evolution of $F_U$ under $M$.
  \end{enumerate}
\end{theorem}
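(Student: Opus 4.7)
The plan is to exploit the composition lemma from Section~\ref{sec:composition} to reduce the theorem to two independent simulation facts: one for the $X$-component (reusing the argument of Theorem~\ref{thm:bp-bn}) and one for the $U$-component. Concretely, any applicable rule set in $\tilde M \dottimes \tilde M_U$ splits uniquely as $\tilde m \cup \tilde u$ with $\tilde m \in \tilde M$ acting only on $X$ and $\tilde u \in \tilde M_U$ acting only on $U$. Since $X$ and $U$ are disjoint alphabets, the two updates touch disjoint parts of the configuration and can therefore be analysed separately within a single simultaneous transition of $\Pi \cup \Pi_U$.

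For the $U$-component, every element of $\tilde M_U = \{R_U^0\}\dottimes 2^{R_U^1}$ has the form $R_U^0 \cup S$ with $S \subseteq R_U^1$, and all its rules carry the trivial guard $\mathbf{1}$. Applying such a set to $W = s\cup\mu$ erases every $u\in U$ through $R_U^0$ and inserts exactly those $u$'s designated by $S$, so the new $U$-content is an arbitrary element $\mu'\in S_U$ freely chosen at the current step. For the $X$-component, the flattening $F'\colon S_{X\cup U}\to S_X$ given by~(\ref{eq:controlled-update-function}) satisfies $f'_x(s\cup\mu) = F(\mu)(s)_x$, so each pair $R_x$ behaves on $W = s\cup\mu$ exactly like the update rules of the ordinary Boolean network $F(\mu)$. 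Combined with Theorem~\ref{thm:bp-bn}, this shows that applying $\tilde m = \bigcup_{x\in m}R_x$ updates the $X$-variables in $m\in M$ according to $F(\mu)$.

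Once these two facts are in place, both directions follow by an explicit bijection on evolutions. Forward: given a BCN evolution $s_0\to\cdots\to s_n$ using controls $\mu_0,\ldots,\mu_{n-1}$ and mode elements $m_0,\ldots,m_{n-1}\in M$, set $W_i = s_i\cup\mu_i$ and apply at step $i$ the rule set $\bigcup_{x\in m_i}R_x \cup R_U^0 \cup \{\emptyset\to\{u\}\mid u\in\mu_{i+1}\}$; this set lies in $\tilde M\dottimes\tilde M_U$ and, by the two facts above, produces $W_{i+1} = s_{i+1}\cup\mu_{i+1}$. Backward: given a Boolean P system evolution $W_0\to\cdots\to W_n$, decompose $W_i = s_i\cup\mu_i$ and the applied set at step $i$ as $\tilde m_i\cup\tilde u_i$ with $\tilde m_i = \bigcup_{x\in m_i}R_x$ for some $m_i\in M$; the sequence $(s_i)$ is then a valid BCN evolution under the control sequence $(\mu_i)$ and the mode elements $(m_i)$.

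The one delicate point, and the one I would flag at the outset to avoid confusion, is an off-by-one alignment: the control that drives the BCN transition from $s_i$ to $s_{i+1}$ is the $\mu_i$ already stored in $W_i$, whereas $\mu_{i+1}$ is installed into $W_{i+1}$ during the \emph{same} step. This is not a real obstacle but it is the conceptual core of the construction: it works precisely because in Boolean P systems all rule guards are evaluated on the pre-transition configuration, so the $X$-update rules read the \emph{current} control while the $U$-rewriting simultaneously prepares the \emph{next} control on top of the very same transition.
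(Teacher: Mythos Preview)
Your proposal is correct and follows the same line as the paper's own proof: split each configuration as $s\cup\mu$ with $s\subseteq X$ and $\mu\subseteq U$, observe that the $X$-rules simulate one step of $F_U(\mu)$ on $s$ (reusing Theorem~\ref{thm:bp-bn}) while the $U$-rules, all carrying the trivial guard, erase the current control and install an arbitrary new one, and then match evolutions step by step in both directions. The only organizational difference is that you route the $X$/$U$ decomposition through the composition lemma of Section~\ref{sec:composition}, whereas the paper argues the split directly without invoking that lemma; your explicit treatment of the off-by-one timing---that the $X$-guards read the \emph{current} $\mu_i$ while the $U$-rules simultaneously install $\mu_{i+1}$---is also sharper than the paper's more discursive handling of the same point (``choose $\tilde m_U^1$ such that it introduces the correct control inputs before each step''). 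One cosmetic loose end: in your forward construction you write $W_i=s_i\cup\mu_i$ for all $i\le n$ but only have controls $\mu_0,\dots,\mu_{n-1}$; just declare $\mu_n$ arbitrary, since no further step reads it.
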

\begin{proof}
  \emph{(1)}\hspace{1ex} Consider two states $s, s' \in S_X$ and
  a control $\mu \in S_U$ such that $F_U(\mu)$ reaches $s'$ from $s$
  in one step.  Take $W, W' \subseteq X$ and $W_U \subseteq U$ by
  respectively taking $s$, $s'$, and $\mu$ as indicator functions.
  Then, as in Theorem \ref{thm:bp-bn}, there exists an
  $\tilde m \in \tilde M$ such that $\Pi$ reaches $W' \cup W_U$ from
  $W \cup W_U$ in one step.  This follows directly from the
  construction of the rules in $\Pi$ and from the fact that $W_U$
  contains exactly the symbols corresponding to the control inputs
  activated by~$\mu$.

  Now take $\tilde M \dottimes \tilde M_U$ and remark that it
  contains an element $\tilde m \cup \tilde m_U$, where
  $\tilde m_U = \tilde m_U^1 \cup R_U^0$ and
  $\tilde m_U^1 \subseteq R_U^1$.  Under this element
  $\tilde m \cup \tilde m_U$, $\Pi \cup \Pi_U$ reaches a state
  $W' \cup W_U'$ from $W \cup W_U$ in one step, where $W_U'$ contains
  the symbols from $U$ introduced by the rules selected by
  $\tilde m_U^1$.  Further note that all the elements of $W_U$ are
  always erased by the rules $R_U^0$, but may be reintroduced by
  $m_U^1$.

  Suppose that $F_U(\mu)$ reaches $s'$ from $s$ in multiple steps.
  Then $\Pi$ reaches $W' \cup W_U$ from $W \cup W_U$ in the same
  number of steps, provided that $\tilde m_U^1$ is always chosen such
  that the rules it activates reintroduce exactly the subset $W_U$.
  If $F_U$ reaches $s'$ from $s$ in multiple steps, but the control
  evolves as well, it suffices to choose $\tilde m_U^1$ such that it
  introduces the correct control inputs before each step.  Finally,
  the control $\mu_0$ applied in the first step of a trajectory of
  $F_U$ must be introduced by setting the starting state of
  $\Pi \cup \Pi_U$ to $W \cup W_U^0$, where $W$ corresponds to the
  initial state of the trajectory of $F_U$.

  \medskip

  \noindent
  \emph{(2)}\hspace{1ex} The converse construction is symmetric.
  A state $W \cup W_U$ of $\Pi \cup \Pi_U$ is translated into the
  state $s \in S_X$ and the control $\mu \in S_U$ corresponding to
  $W_U$.  A step of $\Pi \cup \Pi_U$ under $\tilde m \cup \tilde m_U$
  is translated to applying $\mu$ to $F_U$ and updating the variables
  corresponding to the rules activated by $\tilde m$.  In this way,
  for any trajectory of $\Pi \cup \Pi_U$ under the quasimode
  $\tilde M \dottimes \tilde M_U$ there exists a corresponding
  trajectory in the controlled dynamics of $F_U$.
\end{proof}

The component $\Pi_U$ in the composite P system of
Theorem~\ref{thm:bp-seq} is an explicit implementation of the master
dynamical system driving the evolution of the controlled system $\Pi$.
The setting of this theorem captures the situation in which the
control can change at any moment, but $\Pi_U$ can be designed to
implement other kinds of control sequences.  We give the construction
ideas for the kinds of sequences introduced in~\cite{PardoID21}:
\begin{itemize}[itemsep=1.5mm]
\item \emph{Total Control Sequence (TCS):} all controllable variables
  are controlled.

  \smallskip

  The quasimode of $\Pi_U$ will be correspondingly defined to always
  freeze the controlled variables:
  $\tilde M_U = \{R_U^0\} \dottimes 2^{P_U^1}$, where
  $P_U^1 \subseteq R_U^1$ with the property that for every $x_i \in X$
  every set $p \in P_U^1$ either introduces $u_i^0$ or $u_i^1$, but
  not both.
\item \emph{Abiding Control Sequence (ACS):} once controlled,
  a variable stays controlled forever, but its value may change.

  \smallskip

  The rules of $\Pi_U$ will be constructed to never erase the control
  symbols which have already been introduced, but will be allowed to
  change the value to which the corresponding controlled variable will
  be frozen: $R_U = R_U^1 \cup P_U$, with the new set of rules defined
  as follows:
  \[
    P_U = \left\{\;\{u_i^a\} \to \{u_i^b\} \mid \mathbf{1} \, \mid x_i
      \in X, \, a, b \in \{0,1\}\right\}.
  \]
  The P system $\Pi_U$ will be able to rewrite some of the control
  symbols, or to introduce new control symbols:
  $\tilde M_U = 2^{R_U}$.
\end{itemize}

%%%%%%%%%%%%%%%%%%%%%%%%%%%%%%%%%%%%

\section{Conclusion}

%%%%%%%%%%%%%%%%%%%%%%%%%%%%%%%%%%%%

The motivation of this work stems from the relative underuse of
P systems in representing biological knowledge, in spite of its
obvious biological inspiration.  To informally confirm this intuition
of underuse, we established a state of the art comparing the numbers
of publications using P systems and Boolean networks to represent any
kind of biological knowledge.  Our conclusion is that Boolean networks
tend to be more popular in this line of research.  We speculate that
the reason behind this relative popularity of Boolean networks is the
greater simplicity of the formalism and original interest on the part
of the biological community.

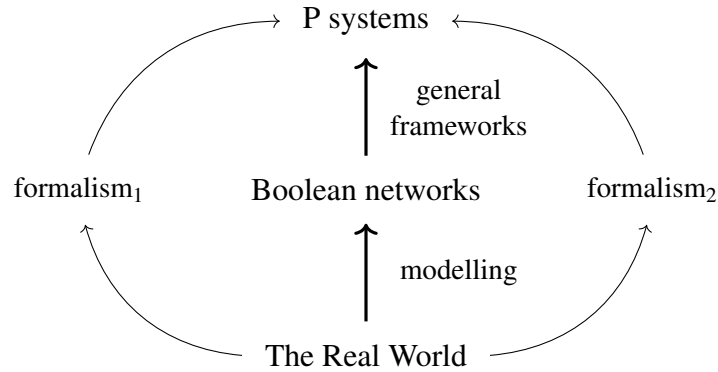
\begin{figure}
  \centering
  \begin{tikzpicture}[node distance=13mm and 8mm]
    \node[inner sep=3mm] (real) {\large The Real World};
    \node[inner sep=3mm,above=of real] (boolean) {\large Boolean networks};
    \node[inner sep=3mm,above=of boolean] (psystems) {\large P systems};
    \node[inner sep=3mm,left=of boolean] (l-ellipsis) {formalism$_1$};
    \node[inner sep=3mm,right=of boolean] (r-ellipsis) {formalism$_2$};

    \newcommand{\bendamount}{35}
    \draw[-To,very thick] (real) -- (boolean);
    \draw[-To,very thick] (boolean) -- (psystems);
    \draw[-To] (real.west) to[bend left=\bendamount] (l-ellipsis);
    \draw[-To] (l-ellipsis) to[bend left=\bendamount] (psystems.west);
    \draw[-To] (real.east) to[bend right=\bendamount] (r-ellipsis);
    \draw[-To] (r-ellipsis) to[bend right=\bendamount] (psystems.east);

    \node[left=14mm of r-ellipsis,anchor=center,yshift=11mm,align=center] {general\\frameworks};
    \node[left=14mm of r-ellipsis,anchor=center,yshift=-11mm] {modelling};
  \end{tikzpicture}
  \caption{A graphical summary of our methodological conclusion:
    P systems are a powerful tool for constructing formal frameworks
    for other formalisms.}
  \label{fig:graphical-conclusion}
\end{figure}

We therefore propose that P systems should be used as a tool for
setting up general frameworks for reasoning about other formalisms,
which are more popular in biological modelling.  We give an example of
such a general framework---Boolean P systems---which capture Boolean
networks and in particular provide a homogeneous language for
sequential controllability.  Indeed, sequential controllability of
Boolean networks implicitly supposes the presence of a master
dynamical system emitting the control inputs.  Our Boolean P system
framework makes this master system explicit, as well as its
interactions with the controlled Boolean network.

The immediate future research direction which we have already started
is actually showing how Boolean P systems facilitate proving some
properties of sequential controllability of Boolean networks.
Another challenge would be capturing and reasoning about the ConEvs
dynamics of the control sequence~\cite{PardoID21}.  Under ConEvs, the
control is only allowed to evolve in a stable state, meaning that the
master dynamical system is not unilaterally acting on the Boolean
network any more, but both of them are part of feedback loop.

The main conclusion of our work is methodological: we believe that the
intrinsic flexibility and richness of P systems makes them an
excellent tool for constructing formal frameworks for other models
of computing.

%%%%%%%%%%%%%%%%%%%%%%%%%%%%%%%%%%%%

\bibliographystyle{eptcs}
\bibliography{NCMA2022PvsB}

%%%%%%%%%%%%%%%%%%%%%%%%%%%%%%%%%%%%

\bigskip

\begin{small}

\subsection*{Appendix 1: A Quantitative Study}
\label{sec:sota}

\begin{figure}[b!]
  \centering
  \includegraphics[width=.7\textwidth]{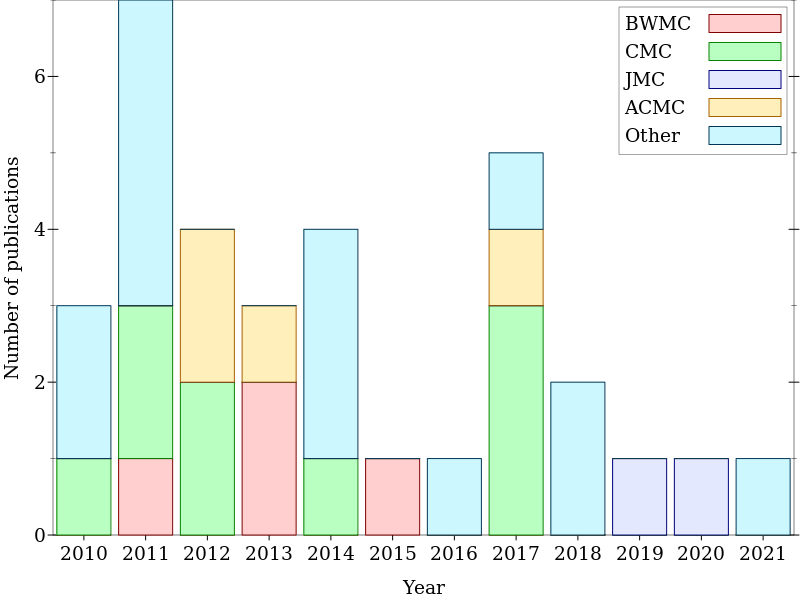}
  \caption{\small A breakdown by source of the 33 publications concerned with
    using P~systems to represent any kind of biological knowledge
    between years 2010 and 2021.  The bibliography behind the source
    ``Other'' is not exhaustive.}
  \label{fig:p-stats}
\end{figure}

To establish a comparative state of the art, we fixed the period
between years 2010 and 2021 and counted the publications using
P systems and Boolean networks for representing any kind of biological
knowledge.  Our choice of the time interval has a double motivation.
On the one hand, in 2010 P systems became a fully mature domain, and
the first international Conference on Membrane Computing was
organized.  On the other hand, Boolean networks started gaining
popularity in modelling and analysis over the same period of time.

For P systems, we focused mostly on the following sources,
representing the major bibliographical references of the domain:
\begin{itemize}
\item the bibliography of the Research Group on Natural
  Computing~\cite{RGNC},
\item the proceedings of the Brainstorming Weeks on Membrane Computing
  in Seville (BWMC), e.g.~\cite{BWMC2020},
\item the proceedings of the Conference on Membrane Computing (CMC),
  e.g.~\cite{CMC2020},
\item the Journal of Membrane Computing, e.g.~\cite{JMC2022},
\item the proceedings of the Asian Conference on Membrane Computing
  (ACMC), e.g.~\cite{ACMC2022}.
\end{itemize}
A quantitative synthesis of the relevant publications in these sources
is shown in Figure~\ref{fig:p-stats}.  This histogram indexes 33
publications.  The category ``Other'' refers to the papers which we
found cited in the indexed sources, and is not exhaustive.

For Boolean networks, we only focused on the publications in the
conference Computational Methods in Systems Biology,
e.g.~\cite{CMSB2021}, concerned with using Boolean networks to
represent any kind of biological knowledge.  We found 18 publications,
as shown in Figure~\ref{fig:cmsb-stats}.

\begin{figure}[h]
  \centering
  \includegraphics[width=.7\textwidth]{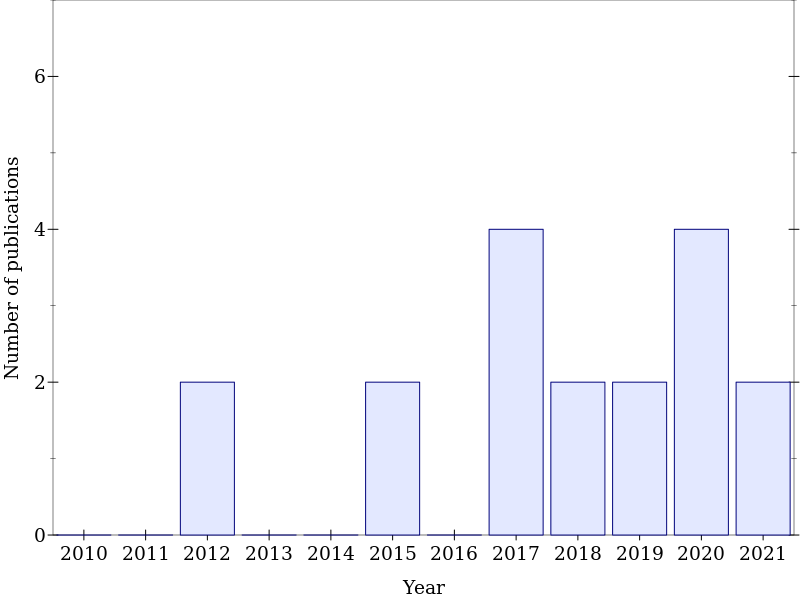}
  \caption{\small The distribution over the period 2010--2021 of the 18
    publications in the proceedings of the international conference
    Computational Methods in Systems Biology (CMSB) using Boolean
    networks to represent any kind of biological knowledge.}
  \label{fig:cmsb-stats}
\end{figure}

Full lists of indexed publications are given in the
following appendices.

The informal conclusion which we draw from this bibliographic study
comparing the number of publications in many major membrane computing
sources to the number of publications in a single systems biology
conference confirms the intuition from the introduction: Boolean
networks enjoy more success in biological modelling and analysis.

Even though explaining the deep reasons behind this disparity is
beyond the scope of our work, we speculate that the ultimate
simplicity of Boolean models and finiteness of the state space may
play a role.  Furthermore, the interest in Boolean modelling may be
traced back to the biological research (e.g.,~\cite{Thomas1973}), and
has developed in tight connection with biology
(e.g.,~\cite{Barabasi2011,Zanudo2018}).

%%%%%%%%%%%%%%%%%%%%%%%%%%%%%%%%%%%%

\subsection*{Appendix 2}
\label{sec:app1-p-stats}

\newcommand{\pubtag}[1]{\tikz[baseline,yshift=1mm]\node[draw,semithick,rounded
  corners=1mm,inner sep=.5mm]{\small #1\strut};}

In this appendix, we list the 33 papers using P systems to represent
any kind of biological knowledge published between years 2010 and 2021
which were counted in Figure~\ref{fig:p-stats}.  The publications are
annotated by tags, representing the source:
\begin{itemize}[itemsep=.5mm]
\item \pubtag{rgnc}: the bibliography of the Research Group on Natural
  Computing,
\item \pubtag{bwmc}: the proceedings of the Brainstorming Weeks on
  Membrane Computing in Seville,
\item \pubtag{cmc}: the proceedings of the Conference on Membrane
  Computing,
\item \pubtag{jmc}: the Journal of Membrane Computing,
\item \pubtag{acmc}: the proceedings of the Asian Conference on
  Membrane Computing.
\end{itemize}

\subsubsection*{2021}
\label{sec:org64b1206}
\begin{enumerate}
\item García-Quismondo, M., Hintz W. D., Schuler M. S., \& Relyea
R. A. (2021):  Modeling Diel Vertical Migration with Membrane
Computing. Journal of Membrane Computing 3, 35--50.
\hfill \pubtag{rgnc}~\pubtag{cmc}
\end{enumerate}

\subsubsection*{2020}
\label{sec:orgbe3ca29}
\begin{enumerate}
\item Barbuti, R., Gori, R., Milazzo, P. et al. (2020): A survey of gene
regulatory networks modelling methods: from differential
equations, to Boolean and qualitative bioinspired
models. Journal of Membrane Computing 2, 207--226.  
\newline
\url{https://doi.org/10.1007/s41965-020-00046-y}
\hfill \pubtag{jmc}
\end{enumerate}

\subsubsection*{2019}
\label{sec:org1ed5f93}
\begin{enumerate}
\item Nash, A., Kalvala, S. (2019): A P system model of swarming and
aggregation in a Myxobacterial colony. Journal of Membrane Computing 1, 
103--111.  
\newline
\url{https://doi.org/10.1007/s41965-019-00015-0}
\hfill \pubtag{jmc}
\end{enumerate}

\subsubsection*{2018}
\label{sec:org24ad420}
\begin{enumerate}
\item Valencia-Cabrera, L., Graciani C., Pérez-Hurtado I.,
Pérez-Jiménez M. J., \& Riscos-Núñez A.  (2018):  A Decade of
Ecological Membrane Computing Applications. Bulletin of the
International Membrane Computing Society. 6, 39--50.
\hfill \pubtag{rgnc}

\item García-Quismondo, M., Graciani C., \& Riscos-Núñez A. (2018):
Membrane Computing as a Modelling Tool: Looking Back and
Forward from Sevilla. In: Carmen Graciani, Agustín Riscos-Núñez,
Gheorghe Păun, Gregorz Rozenberg, Arto Salomaa, editors: Enjoying
Natural Computing: Essays Dedicated to Mario de Jesús
Pérez-Jiménez on the Occasion of His 70th Birthday. 114--129.
\hfill \pubtag{rgnc}
\end{enumerate}

\subsubsection*{2017}
\label{sec:org04cd99e}
\begin{enumerate}
\item Cavaliere M., Sanchez A. (2017): The Evolutionary Resilience of
Distributed Cellular Computing. In: Leporati A., Rozenberg G.,
Salomaa A., Zandron C., editors: Membrane Computing. CMC 2016. Lecture
Notes in Computer Science, vol. 10105. Springer, Cham.  
\newline
\url{https://doi.org/10.1007/978-3-319-54072-6\_1}
\hfill \pubtag{cmc}

\item Hinze T.: Coping with Dynamical Structures for
Interdisciplinary Applications of Membrane Computing (2017). In:
Leporati A., Rozenberg G., Salomaa A., Zandron C., editors: Membrane
Computing. CMC 2016. Lecture Notes in Computer Science,
vol. 10105. Springer, Cham.  
\newline
\url{https://doi.org/10.1007/978-3-319-54072-6\_2}
\hfill \pubtag{cmc}

\item Barbuti R., Bove P., Milazzo P., Pardini G. (2017):  Applications
of P Systems in Population Biology and Ecology: The Cases of MPP
and APP Systems. In: Leporati A., Rozenberg G., Salomaa A.,
Zandron C., editors: Membrane Computing. CMC 2016. Lecture Notes in
Computer Science, vol. 10105. Springer, Cham.  
\newline
\url{https://doi.org/10.1007/978-3-319-54072-6\_3}
\hfill \pubtag{cmc}

\item Zhang G., Pérez-Jiménez M.J., Gheorghe M. (2017): Data Modeling
with Membrane Systems: Applications to Real Ecosystems. In:
Real-life Applications with Membrane Computing. Emergence,
Complexity and Computation, vol. 25. Springer, Cham.  
\newline
\url{https://doi.org/10.1007/978-3-319-55989-6\_7}

\item Mario J. Pérez-Jiménez (2017): Modelling the dynamics of complex
systems: A membrane computing based framework, Proceedings of 
the 6th Asian Conference on Membrane Computing, 2017.
\hfill \pubtag{acmc}
\end{enumerate}

\subsubsection*{2016}
\label{sec:org686a3c7}
\begin{enumerate}
\item Cristian Fondevilla, M. Àngels Colomer, Federico Fillat, Ulrike
Tappeiner (2016): Using a new PDP modelling approach for land-use and
land-cover change predictions: A case study in the Stubai Valley
(Central Alps), Ecological Modelling, vol. 322, pp.
101--114, ISSN 0304-3800, 
\newline
\url{https://doi.org/10.1016/j.ecolmodel.2015.11.016}.
\end{enumerate}

\subsubsection*{2015}
\label{sec:org142af88}
\begin{enumerate}
\item Gheorghe Păun (2011): Looking for Computer in the Biological Cell. After
Twenty Years, Proceedings of the Ninth Brainstorming Week on
Membrane Computing, 251--300.
\hfill \pubtag{bwmc}
\end{enumerate}

\subsubsection*{2014}
\label{sec:org0aaea39}
\begin{enumerate}
\item Colomer, A. M., Margalida A., Valencia-Cabrera L., \& Palau
A. (2014):  Application of a computational model for complex
fluvial ecosystems: The population dynamics of zebra mussel
Dreissena polymorpha as a case study. Ecological
Complexity 20, 116--126.
\hfill \pubtag{rgnc}

\item Frisco, P., Gheorghe M., \& Pérez-Jiménez M. J. (2014):
Applications of Membrane Computing in Systems and Synthetic
Biology. Emergence, Complexity and Computation. 7, 266.
\hfill \pubtag{rgnc}

\item Blakes, J., Twycross J., Konur S., Romero-Campero F. J.,
Krasnogor N., \& Gheorghe M.(2014):  Infobiotics Workbench:
A P Systems Based Tool for Systems and Synthetic
Biology. Applications of Membrane Computing in Systems and
Synthetic Biology 7, 1--42 .
\hfill\pubtag{rgnc}

\item Pérez-Jiménez M.J. (2014): A Bioinspired Computing Approach to
Model Complex Systems. In: Gheorghe M., Rozenberg G., Salomaa
A., Sosík P., Zandron C., editors: Membrane
Computing. CMC 2014. Lecture Notes in Computer Science,
vol. 8961. Springer, Cham.  
\newline
\url{https://doi.org/10.1007/978-3-319-14370-5\_2}
\hfill \pubtag{cmc}
\end{enumerate}

\subsubsection*{2013}
\label{sec:org018864d}
\begin{enumerate}
\item Ardelean, I., Díaz-Pernil D., Gutiérrez-Naranjo M. A.,
Peña-Cantillana F., \& Sarchizian I. (2013):  Studying the
Chlorophyll Fluorescence in Cyanobacteria with Membrane
Computing Techniques. Eleventh Brainstorming Week on Membrane
Computing (11BWMC), 9--24.
\hfill \pubtag{rgnc}~\pubtag{bwmc}

\item L. Valencia-Cabrera, M. García-Quismondo, M.J. Pérez-Jiménez,
Y. Su, H. Yu, L. Pan  (2011): Analysing Gene Networks with PDP
Systems. Arabidopsis thaliana, a Case Study. Proceedings of the
Ninth Brainstorming Week on Membrane Computing, 257--272.
\hfill \pubtag{bwmc}

\item Colomer M.À., Margalida A., Pérez-Jiménez M.J.  (2013): 
Population Dynamics
P system (PDP) models: a standardized protocol for describing
and applying novel bio-inspired computing tools. Plos
one. 8(4):e60698. 
\newline
DOI: 10.1371/journal.pone.0060698. PMID:
23593284; PMCID: PMC3622025.
\hfill \pubtag{acmc}
\end{enumerate}

\subsubsection*{2012}
\label{sec:org03eeb3a}
\begin{enumerate}
\item García-Quismondo, M., Valencia-Cabrera L., Su Y., Pérez-Jiménez
M. J., Pan L., \& Yu H.  (2012):  Modeling logic gene networks by
means of probabilistic dynamic P systems. In: Linqiang Pan,
Gheorghe Paun, Tao Song, editors. Asian Conference on Membrane
Computing. 30--60 (2012).
\hfill \pubtag{rgnc} \pubtag{acmc}

\item Romero-Campero, F. J., \& Pérez-Jiménez M.J. (2012):  P systems
as a modeling framework for molecular Systems
Biology. In: Linqiang Pan, Gheorghe Paun, Tao Song, editors: Asian
Conference on Membrane Computing. 8--10.
\hfill \pubtag{rgnc} \pubtag{acmc}

\item Martínez-del-Amor M.A. et al. (2013): DCBA: Simulating Population
Dynamics P Systems with Proportional Object Distribution. In:
Csuhaj-Varjú E., Gheorghe M., Rozenberg G., Salomaa A., Vaszil
G., editors: Membrane Computing. CMC 2012. Lecture Notes in Computer
Science, vol. 7762. Springer, Berlin, Heidelberg.  
\newline
\url{https://doi.org/10.1007/978-3-642-36751-9\_18}
\hfill \pubtag{rgnc} \pubtag{cmc}

\item Ramón P., Troina A. (2013): Modelling Ecological Systems with the
Calculus of Wrapped Compartments. In: Csuhaj-Varjú E., Gheorghe
M., Rozenberg G., Salomaa A., Vaszil G., editors: Membrane
Computing. CMC 2012. Lecture Notes in Computer Science,
vol. 7762. Springer, Berlin, Heidelberg.  
\newline
\url{https://doi.org/10.1007/978-3-642-36751-9\_24}
\hfill \pubtag{cmc}
\end{enumerate}

\subsubsection*{2011}
\label{sec:orgc3392e7}
\begin{enumerate}
\item Colomer, A. M., Lavín S., Marco I., Margalida A.,
Pérez-Hurtado I., Pérez-Jiménez M. J., et al. (2011):
Modeling population growth of Pyrenean Chamois (Rupicapra
p. pyrenaica) by using P systems. Lecture Notes in Computer
Science. 6501, 144--159.
\hfill \pubtag{rgnc}

\item Gheorghe, M., Manca V., \& Romero-Campero F. J. (2011):
Deterministic and stochastic P systems for modelling cellular
processes. Natural Computing. 9(2), 457--473.
\hfill \pubtag{rgnc}

\item Colomer, A. M., Pérez-Hurtado I., Riscos-Núñez A., \&
Pérez-Jiménez M. J. (2011):  Comparing simulation algorithms
for multienvironment probabilistic P system over a standard
virtual ecosystem. Natural Computing 11, 369--379.
\ \hfill \pubtag{rgnc}

\item Cardona, M., Colomer M. A., Margalida A., Palau A.,
Pérez-Hurtado I., Pérez-Jiménez M. J., et al. (2011):
A computational modeling for real ecosystems based on
P systems. Natural Computing 10(1), 39--53.

\hfill \pubtag{rgnc}

\item M.A. Colomer, C. Fondevilla, L. Valencia-Cabrera (2011): 
A New P System to Model the Subalpine and Alpine Plant Communities, 
Proceedings of the Ninth Brainstorming Week on Membrane Computing,
91--112.
\hfill \pubtag{bwmc}

\item Beal J.:  Bridging Biology and Engineering Together with
Spatial Computing (2012). In: Gheorghe M., Păun Gh., Rozenberg G.,
Salomaa A., Verlan S., editors Membrane
Computing. CMC 2011. Lecture Notes in Computer Science,
vol. 7184. Springer, Berlin, Heidelberg.  
\newline
\url{https://doi.org/10.1007/978-3-642-28024-5\_2}
\hfill \pubtag{cmc}

\item Giavitto J.L. (2012): The Modeling and the Simulation of the Fluid
Machines of Synthetic Biology. In: Gheorghe M., Păun Gh.,
Rozenberg G., Salomaa A., Verlan S., editors: Membrane
Computing. CMC 2011. Lecture Notes in Computer Science,
vol. 7184. Springer, Berlin, Heidelberg.  
\newline
\url{https://doi.org/10.1007/978-3-642-28024-5\_3}
\hfill \pubtag{cmc}
\end{enumerate}

\subsubsection*{2010}
\label{sec:org6c28d16}
\begin{enumerate}
\item Colomer, A. M., Lavín S., Marco I., Margalida A.,
Pérez-Hurtado I., Pérez-Jiménez M. J., et al. (2010):
Modeling population growth of Pyrenean Chamois (Rupicapra
p. pyrenayca) by using P systems. In: Marian Gheorghe, Thomas
Hinze, Gheorghe Păun, editors: Eleventh International Conference
on Membrane Computing (CMC11). 121--135. 
\hfill \pubtag{rgnc} \pubtag{cmc}

\item Cardona, M., Colomer A. M., Margalida A., Pérez-Hurtado I.,
Pérez-Jiménez M. J., \& Sanuy D. (2010):  A P system based
model of an ecosystem of some scavenger birds. Lecture Notes
in Computer Science, vol. 5957, 182--195.
\hfill \pubtag{rgnc}

\item Besozzi D., Cazzaniga P., Mauri G., Pescini D. (2010):
BioSimWare: A Software for the Modeling, Simulation and Analysis
of Biological Systems. In: Gheorghe M., Hinze T., Păun Gh.,
Rozenberg G., Salomaa A., editors: Membrane
Computing. CMC 2010. Lecture Notes in Computer Science,
vol. 6501. Springer, Berlin, Heidelberg.  
\newline
\url{https://doi.org/10.1007/978-3-642-18123-8\_12}
\hfill \pubtag{cmc}
\end{enumerate}

%%%%%%%%%%%%%%%%%%%%%%%%%%%%%%%%%%%%

\subsection*{Appendix 3}
\label{sec:cmsb-stats}

In this appendix, we list the 18 papers using Boolean networks to
represent any kind of biological knowledge, published between the
years 2010 and 2021 in the proceedings of the international conference
on Computational Methods in Systems Biology (CMSB), and which were
counted in Figure~\ref{fig:cmsb-stats}.

\subsubsection*{2021}
\label{sec:org800ff97}
\begin{enumerate}
\item Biswas A., Gupta A., Missula M., Thattai M. (2021): Automated
Inference of Production Rules for Glycans. In: Cinquemani E.,
Paulevé L., editors: Computational Methods in Systems
Biology. CMSB 2021. Lecture Notes in Computer Science,
vol. 12881. Springer, Cham.  
\newline
\url{https://doi.org/10.1007/978-3-030-85633-5\_4}

\item Thuillier K., Baroukh C., Bockmayr A., Cottret L., Paulevé L.,
Siegel A. (2021): Learning Boolean Controls in Regulated
Metabolic Networks: A Case-Study. In: Cinquemani E., Paulevé
L., editors: Computational Methods in Systems
Biology. CMSB 2021. Lecture Notes in Computer Science,
vol. 12881. Springer, Cham.  
\newline
\url{https://doi.org/10.1007/978-3-030-85633-5\_10}
\end{enumerate}

\subsubsection*{2020}
\label{sec:org8021518}
\begin{enumerate}
\item Cifuentes Fontanals L., Tonello E., Siebert H. (2020): Control
Strategy Identification via Trap Spaces in Boolean Networks. In:
Abate A., Petrov T., Wolf V., editors: Computational Methods in
Systems Biology. CMSB 2020. Lecture Notes in Computer Science,
vol. 12314. Springer, Cham. 
\newline
\url{https://doi.org/10.1007/978-3-030-60327-4\_9}

\item Diop O., Chaves M., Tournier L. (2020): Qualitative Analysis of
Mammalian Circadian Oscillations: Cycle Dynamics and
Robustness. In: Abate A., Petrov T., Wolf V., editors: Computational
Methods in Systems Biology. CMSB 2020. Lecture Notes in Computer
Science, vol. 12314. Springer, Cham.  
\newline
\url{https://doi.org/10.1007/978-3-030-60327-4\_10}

\item Chevalier S., Noël V., Calzone L., Zinovyev A., Paulevé
L. (2020): Synthesis and Simulation of Ensembles of Boolean
Networks for Cell Fate Decision. In: Abate A., Petrov T., Wolf
V., editors: Computational Methods in Systems
Biology. CMSB 2020. Lecture Notes in Computer Science,
vol. 12314. Springer, Cham.  
\newline
\url{https://doi.org/10.1007/978-3-030-60327-4\_11}

\item Su C., Pang J. (2020): Sequential Temporary and Permanent Control
of Boolean Networks. In: Abate A., Petrov T., Wolf V., editors:
Computational Methods in Systems Biology. CMSB 2020. Lecture
Notes in Computer Science, vol. 12314. Springer, Cham.  
\newline
\url{https://doi.org/10.1007/978-3-030-60327-4\_13}
\end{enumerate}

\subsubsection*{2019}
\label{sec:orge907130}
\begin{enumerate}
\item Mandon H., Su C., Haar S., Pang J., Paulevé L. (2019): Sequential
Reprogramming of Boolean Networks Made Practical. In: Bortolussi
L., Sanguinetti G., editors: Computational Methods in Systems
Biology. CMSB 2019. Lecture Notes in Computer Science,
vol. 11773. Springer, Cham. 
\newline
 \url{https://doi.org/10.1007/978-3-030-31304-3\_1}
 
\item Pardo J., Ivanov S., Delaplace F. (2019): Sequential
Reprogramming of Biological Network Fate. In: Bortolussi L.,
Sanguinetti G., editors: Computational Methods in Systems
Biology. CMSB 2019. Lecture Notes in Computer Science,
vol. 11773. Springer, Cham.  
\newline
\url{https://doi.org/10.1007/978-3-030-31304-3\_2}
\end{enumerate}

\subsubsection*{2018}
\label{sec:org5e5ba4c}
\begin{enumerate}
\item Razzaq M., Kaminski R., Romero J., Schaub T., Bourdon J.,
Guziolowski C. (2018): Computing Diverse Boolean Networks from
Phosphoproteomic Time Series Data. In: Češka M., Šafránek
D., editors: Computational Methods in Systems
Biology. CMSB 2018. Lecture Notes in Computer Science,
vol. 11095. Springer, Cham.  
\newline
\url{https://doi.org/10.1007/978-3-319-99429-1\_4}

\item Paul S., Pang J., Su C. (2018): On the Full Control of Boolean
Networks. In: Češka M., Šafránek D., editors: Computational Methods
in Systems Biology. CMSB 2018. Lecture Notes in Computer
Science, vol. 11095. Springer, Cham.  
\newline
\url{https://doi.org/10.1007/978-3-319-99429-1\_21}
\end{enumerate}

\subsubsection*{2017}
\label{sec:org9b6446b}
\begin{enumerate}
\item Biane C., Delaplace F. (2017): Abduction Based Drug Target
Discovery Using Boolean Control Network. In: Feret J., Koeppl
H., editors: Computational Methods in Systems
Biology. CMSB 2017. Lecture Notes in Computer Science,
vol. 10545. Springer, Cham.  
\newline
\url{https://doi.org/10.1007/978-3-319-67471-1\_4}

\item Carcano A., Fages F., Soliman S. (2017): Probably Approximately
Correct Learning of Regulatory Networks from Time-Series
Data. In: Feret J., Koeppl H., editors: Computational Methods in
Systems Biology. CMSB 2017. Lecture Notes in Computer Science,
vol. 10545. Springer, Cham.  
\newline
\url{https://doi.org/10.1007/978-3-319-67471-1\_5}

\item Mandon H., Haar S., Paulevé L. (2017): Temporal Reprogramming of
Boolean Networks. In: Feret J., Koeppl H., editors: Computational
Methods in Systems Biology. CMSB 2017. Lecture Notes in Computer
Science, vol. 10545. Springer, Cham.  
\newline
\url{https://doi.org/10.1007/978-3-319-67471-1\_11}

\item Paulevé L. (2017): Pint: A Static Analyzer for Transient Dynamics
of Qualitative Networks with IPython Interface. In: Feret J.,
Koeppl H., editors: Computational Methods in Systems
Biology. CMSB 2017. Lecture Notes in Computer Science,
vol. 10545. Springer, Cham.  
\newline
\url{https://doi.org/10.1007/978-3-319-67471-1\_20}
\end{enumerate}

\subsubsection*{2015}
\label{sec:orgf35d300}
\begin{enumerate}
\item Ostrowski M., Paulevé L., Schaub T., Siegel A., Guziolowski
C. (2015): Boolean Network Identification from Multiplex Time
Series Data. In: Roux O., Bourdon J., editors: Computational Methods
in Systems Biology. CMSB 2015. Lecture Notes in Computer
Science, vol. 9308. Springer, Cham.  
\newline
\url{https://doi.org/10.1007/978-3-319-23401-4\_15}

\item Abou-Jaoudé W., Feret J., Thieffry D. (2015): Derivation of
Qualitative Dynamical Models from Biochemical Networks. In: Roux
O., Bourdon J., editors: Computational Methods in Systems
Biology. CMSB 2015. Lecture Notes in Computer Science,
vol. 9308. Springer, Cham.  
\newline
\url{https://doi.org/10.1007/978-3-319-23401-4\_17}
\end{enumerate}

\subsubsection*{2012}
\label{sec:org05493e9}
\begin{enumerate}
\item Folschette M., Paulevé L., Inoue K., Magnin M., Roux O. (2012): 
Concretizing the Process Hitting into Biological Regulatory
Networks. In: Gilbert D., Heiner M. (eds) Computational Methods
in Systems Biology. CMSB 2012. Lecture Notes in Computer
Science, vol. 7605. Springer, Berlin.
\newline
Heidelberg. \url{https://doi.org/10.1007/978-3-642-33636-2\_11}

\item Naldi A., Monteiro P.T., Chaouiya C. (2012): Efficient Handling
of Large Signalling-Regulatory Networks by Focusing on Their
Core Control. In: Gilbert D., Heiner M., editors: Computational
Methods in Systems Biology. CMSB 2012. Lecture Notes in Computer
Science, vol. 7605. Springer, Berlin, Heidelberg . 
\newline
\url{https://doi.org/10.1007/978-3-642-33636-2\_17}
\end{enumerate}
\end{small}
\end{document}